\newtheorem{lem}{Lemma}[section]
\newtheorem{defn}{Definition}[section]
\newtheorem{conj}{Conjecture}
\newtheorem{rem}{Remark}
\numberwithin{equation}{section}
\begin{document}
\newcommand{\beqa}{\begin{eqnarray}}
\newcommand{\eeqa}{\end{eqnarray}}
\newcommand{\thmref}[1]{Theorem~\ref{#1}}
\newcommand{\secref}[1]{Sect.~\ref{#1}}
\newcommand{\lemref}[1]{Lemma~\ref{#1}}
\newcommand{\propref}[1]{Proposition~\ref{#1}}
\newcommand{\corref}[1]{Corollary~\ref{#1}}
\newcommand{\remref}[1]{Remark~\ref{#1}}
\newcommand{\er}[1]{(\ref{#1})}
\newcommand{\nc}{\newcommand}
\newcommand{\rnc}{\renewcommand}

\nc{\cal}{\mathcal}

\nc{\goth}{\mathfrak}
\rnc{\bold}{\mathbf}
\renewcommand{\frak}{\mathfrak}
\renewcommand{\Bbb}{\mathbb}

\newcommand{\id}{\text{id}}
\nc{\Cal}{\mathcal}
\nc{\Xp}[1]{X^+(#1)}
\nc{\Xm}[1]{X^-(#1)}
\nc{\on}{\operatorname}
\nc{\ch}{\mbox{ch}}
\nc{\Z}{{\bold Z}}
\nc{\J}{{\mathcal J}}
\nc{\C}{{\bold C}}
\nc{\Q}{{\bold Q}}
\renewcommand{\P}{{\mathcal P}}
\nc{\N}{{\Bbb N}}
\nc\beq{\begin{equation}}
\nc\enq{\end{equation}}
\nc\lan{\langle}
\nc\ran{\rangle}
\nc\bsl{\backslash}
\nc\mto{\mapsto}
\nc\lra{\leftrightarrow}
\nc\hra{\hookrightarrow}
\nc\sm{\smallmatrix}
\nc\esm{\endsmallmatrix}
\nc\sub{\subset}
\nc\ti{\tilde}
\nc\nl{\newline}
\nc\fra{\frac}
\nc\und{\underline}
\nc\ov{\overline}
\nc\ot{\otimes}
\nc\bbq{\bar{\bq}_l}
\nc\bcc{\thickfracwithdelims[]\thickness0}
\nc\ad{\text{\rm ad}}
\nc\Ad{\text{\rm Ad}}
\nc\Hom{\text{\rm Hom}}
\nc\End{\text{\rm End}}
\nc\Ind{\text{\rm Ind}}
\nc\Res{\text{\rm Res}}
\nc\Ker{\text{\rm Ker}}
\rnc\Im{\text{Im}}
\nc\sgn{\text{\rm sgn}}
\nc\tr{\text{\rm tr}}
\nc\Tr{\text{\rm Tr}}
\nc\supp{\text{\rm supp}}
\nc\card{\text{\rm card}}
\nc\bst{{}^\bigstar\!}
\nc\he{\heartsuit}
\nc\clu{\clubsuit}
\nc\spa{\spadesuit}
\nc\di{\diamond}
\nc\cW{\cal W}
\nc\cG{\cal G}
\nc\al{\alpha}
\nc\bet{\beta}
\nc\ga{\gamma}
\nc\de{\delta}
\nc\ep{\epsilon}
\nc\io{\iota}
\nc\om{\omega}
\nc\si{\sigma}
\rnc\th{\theta}
\nc\ka{\kappa}
\nc\la{\lambda}
\nc\ze{\zeta}

\nc\vp{\varpi}
\nc\vt{\vartheta}
\nc\vr{\varrho}

\nc\Ga{\Gamma}
\nc\De{\Delta}
\nc\Om{\Omega}
\nc\Si{\Sigma}
\nc\Th{\Theta}
\nc\La{\Lambda}

\nc\boa{\bold a}
\nc\bob{\bold b}
\nc\boc{\bold c}
\nc\bod{\bold d}
\nc\boe{\bold e}
\nc\bof{\bold f}
\nc\bog{\bold g}
\nc\boh{\bold h}
\nc\boi{\bold i}
\nc\boj{\bold j}
\nc\bok{\bold k}
\nc\bol{\bold l}
\nc\bom{\bold m}
\nc\bon{\bold n}
\nc\boo{\bold o}
\nc\bop{\bold p}
\nc\boq{\bold q}
\nc\bor{\bold r}
\nc\bos{\bold s}
\nc\bou{\bold u}
\nc\bov{\bold v}
\nc\bow{\bold w}
\nc\boz{\bold z}

\nc\ba{\bold A}
\nc\bb{\bold B}
\nc\bc{\bold C}
\nc\bd{\bold D}
\nc\be{\bold E}
\nc\bg{\bold G}
\nc\bh{\bold H}
\nc\bi{\bold I}
\nc\bj{\bold J}
\nc\bk{\bold K}
\nc\bl{\bold L}
\nc\bm{\bold M}
\nc\bn{\bold N}
\nc\bo{\bold O}
\nc\bp{\bold P}
\nc\bq{\bold Q}
\nc\br{\bold R}
\nc\bs{\bold S}
\nc\bt{\bold T}
\nc\bu{\bold U}
\nc\bv{\bold V}
\nc\bw{\bold W}
\nc\bz{\bold Z}
\nc\bx{\bold X}

\nc\ca{\mathcal A}
\nc\cb{\mathcal B}
\nc\cc{\mathcal C}
\nc\cd{\mathcal D}
\nc\ce{\mathcal E}
\nc\cf{\mathcal F}
\nc\cg{\mathcal G}
\rnc\ch{\mathcal H}
\nc\ci{\mathcal I}
\nc\cj{\mathcal J}
\nc\ck{\mathcal K}
\nc\cl{\mathcal L}
\nc\cm{\mathcal M}
\nc\cn{\mathcal N}
\nc\co{\mathcal O}
\nc\cp{\mathcal P}
\nc\cq{\mathcal Q}
\nc\car{\mathcal R}
\nc\cs{\mathcal S}
\nc\ct{\mathcal T}
\nc\cu{\mathcal U}
\nc\cv{\mathcal V}
\nc\cz{\mathcal Z}
\nc\cx{\mathcal X}
\nc\cy{\mathcal Y}

\nc\e[1]{E_{#1}}
\nc\ei[1]{E_{\delta - \alpha_{#1}}}
\nc\esi[1]{E_{s \delta - \alpha_{#1}}}
\nc\eri[1]{E_{r \delta - \alpha_{#1}}}
\nc\ed[2][]{E_{#1 \delta,#2}}
\nc\ekd[1]{E_{k \delta,#1}}
\nc\emd[1]{E_{m \delta,#1}}
\nc\erd[1]{E_{r \delta,#1}}

\nc\ef[1]{F_{#1}}
\nc\efi[1]{F_{\delta - \alpha_{#1}}}
\nc\efsi[1]{F_{s \delta - \alpha_{#1}}}
\nc\efri[1]{F_{r \delta - \alpha_{#1}}}
\nc\efd[2][]{F_{#1 \delta,#2}}
\nc\efkd[1]{F_{k \delta,#1}}
\nc\efmd[1]{F_{m \delta,#1}}
\nc\efrd[1]{F_{r \delta,#1}}

\nc\fa{\frak a}
\nc\fb{\frak b}
\nc\fc{\frak c}
\nc\fd{\frak d}
\nc\fe{\frak e}
\nc\ff{\frak f}
\nc\fg{\frak g}
\nc\fh{\frak h}
\nc\fj{\frak j}
\nc\fk{\frak k}
\nc\fl{\frak l}
\nc\fm{\frak m}
\nc\fn{\frak n}
\nc\fo{\frak o}
\nc\fp{\frak p}
\nc\fq{\frak q}
\nc\fr{\frak r}
\nc\fs{\frak s}
\nc\ft{\frak t}
\nc\fu{\frak u}
\nc\fv{\frak v}
\nc\fz{\frak z}
\nc\fx{\frak x}
\nc\fy{\frak y}

\nc\fA{\frak A}
\nc\fB{\frak B}
\nc\fC{\frak C}
\nc\fD{\frak D}
\nc\fE{\frak E}
\nc\fF{\frak F}
\nc\fG{\frak G}
\nc\fH{\frak H}
\nc\fJ{\frak J}
\nc\fK{\frak K}
\nc\fL{\frak L}
\nc\fM{\frak M}
\nc\fN{\frak N}
\nc\fO{\frak O}
\nc\fP{\frak P}
\nc\fQ{\frak Q}
\nc\fR{\frak R}
\nc\fS{\frak S}
\nc\fT{\frak T}
\nc\fU{\frak U}
\nc\fV{\frak V}
\nc\fZ{\frak Z}
\nc\fX{\frak X}
\nc\fY{\frak Y}
\nc\tfi{\ti{\Phi}}
\nc\bF{\bold F}
\rnc\bol{\bold 1}

\nc\ua{\bold U_\A}

\nc\qinti[1]{[#1]_i}
\nc\q[1]{[#1]_q}
\nc\xpm[2]{E_{#2 \delta \pm \alpha_#1}}  
\nc\xmp[2]{E_{#2 \delta \mp \alpha_#1}}
\nc\xp[2]{E_{#2 \delta + \alpha_{#1}}}
\nc\xm[2]{E_{#2 \delta - \alpha_{#1}}}
\nc\hik{\ed{k}{i}}
\nc\hjl{\ed{l}{j}}
\nc\qcoeff[3]{\left[ \begin{smallmatrix} {#1}& \\ {#2}& \end{smallmatrix}
\negthickspace \right]_{#3}}
\nc\qi{q}
\nc\qj{q}

\nc\ufdm{{_\ca\bu}_{\rm fd}^{\le 0}}


\nc\isom{\cong} 

\nc{\pone}{{\Bbb C}{\Bbb P}^1}
\nc{\pa}{\partial}
\def\H{\mathcal H}
\def\L{\mathcal L}
\nc{\F}{{\mathcal F}}
\nc{\Sym}{{\goth S}}
\nc{\A}{{\mathcal A}}
\nc{\arr}{\rightarrow}
\nc{\larr}{\longrightarrow}

\nc{\ri}{\rangle}
\nc{\lef}{\langle}
\nc{\W}{{\mathcal W}}
\nc{\uqatwoatone}{{U_{q,1}}(\su)}
\nc{\uqtwo}{U_q(\goth{sl}_2)}
\nc{\dij}{\delta_{ij}}
\nc{\divei}{E_{\alpha_i}^{(n)}}
\nc{\divfi}{F_{\alpha_i}^{(n)}}
\nc{\Lzero}{\Lambda_0}
\nc{\Lone}{\Lambda_1}
\nc{\ve}{\varepsilon}
\nc{\phioneminusi}{\Phi^{(1-i,i)}}
\nc{\phioneminusistar}{\Phi^{* (1-i,i)}}
\nc{\phii}{\Phi^{(i,1-i)}}
\nc{\Li}{\Lambda_i}
\nc{\Loneminusi}{\Lambda_{1-i}}
\nc{\vtimesz}{v_\ve \otimes z^m}

\nc{\asltwo}{\widehat{\goth{sl}_2}}
\nc\ag{\widehat{\goth{g}}}  
\nc\teb{\tilde E_\boc}
\nc\tebp{\tilde E_{\boc'}}

\newcommand{\LR}{\bar{R}}
\newcommand{\eeq}{\end{equation}}
\newcommand{\ben}{\begin{eqnarray}}
\newcommand{\een}{\end{eqnarray}}

\title[Higher order relations for ADE-type $q-$Onsager algebras]{Higher order relations for ADE-type \\
generalized $q-$Onsager algebras}
\author{P. Baseilhac and T. T. Vu}
\address{Laboratoire de Math\'ematiques et Physique Th\'eorique CNRS/UMR 7350,
        F\'ed\'eration Denis Poisson, Universit\'e de Tours, Parc de Grammont, 37200 Tours, FRANCE}
\email{baseilha@lmpt.univ-tours.fr; Thi-thao.Vu@lmpt.univ-tours.fr}

\begin{abstract} Let $\{{\textsf A}_j|j=0,1,...,rank(g)\}$ be the fundamental generators of the generalized $q-$Onsager algebra $\cal O_{q}(\widehat{g})$ introduced in \cite{BB1}, where $\widehat{g}$ is a simply-laced affine Lie algebra. New relations between certain monomials of the fundamental generators - indexed by the integer $r\in\mathbb{Z}^{+}$ - are conjectured. These relations can be seen as deformed analogues of Lusztig's $r-$th higher order $q-$Serre relations associated with ${\cal U}_q({\widehat g})$, which are recovered as special cases. The relations are proven for $r\leq 5$. For $r$ generic, several supporting evidences are presented.
\end{abstract}

\maketitle

\vskip -0.7cm

{\small MSC:\ 81R50;\ 81R10;\ 81U15;\ 81T40.}

{{\small  {\it \bf Keywords}:  Generalized $q-$Onsager algebras; Coideal subalgebras; Reflection equations; $q-$Serre relations}}

\vspace{4mm}

\section{Introduction}
Introduced in \cite{BB1}, the generalized $q-$Onsager algebra ${\cal O}_q({\widehat{g}})$ associated with the affine Lie algebra $\widehat{g}$ is a higher rank generalization of the so-called $q-$Onsager algebra \cite{Ter03,B1}. For $\widehat{g}=a_n^{(1)}$, it can be understood as a $q-$deformation of the $sl_{n+1}$-Onsager algebra introduced by Uglov and Ivanov \cite{Uglov}. By analogy with the $\widehat{sl_2}$ case \cite{B1,IT32}, an algebra homomorphism from ${\cal O}_q({\widehat{g}})$ to a coideal subalgebra of the Drinfeld-Jimbo \cite{Dr,Jim} quantum universal enveloping algebra ${\cal U}_q(\widehat{g})$  is known \cite{BB1} (see also \cite{Kolb}). Realizations in terms of finite dimensional quantum algebras can be also considered (see for instance \cite{BF}): using either the coideal subalgebras of ${\cal U}_q(g)$ studied by Letzter \cite{Lez} or the non-standard ${\cal U}'_q(so_n)$ introduced by Klimyk, Gavrilik and Iorgov \cite{GI,Klim}.  Part of the motivation for the present letter comes from the fact that generalized $q-$Onsager algebras already find applications in the study of quantum integrable systems with boundaries\footnote{Integrable {\it scalar} \cite{GZ,dVG,Cor} or {\it dynamical} \cite{BasK3,BF} boundary conditions of  boundary affine Toda field theories are classified according to the representation theory of ${\cal O}_q({\widehat{g}})$ \cite{BB1}. Also, several known solutions of the reflection equations are intertwiners of ${\cal O}_q({\widehat{g}})$. For explicit examples, see e.g. \cite{DelG,DM,BF}.}, and so deserve further investigation.\vspace{1mm}

Besides the definition of the generalized $q-$Onsager algebra in terms of generators and relations \cite[Definition 2.1]{BB1}, most of its properties remain to be studied. In view of its relation with coideal subalgebras of ${\cal U}_q(\widehat{g})$ \cite{BB1,Kolb} and its application to the theory of quantum integrable systems, an important problem is to identify those properties of ${\cal U}_q(\widehat{g})$ which could be somehow extended to ${\cal O}_q({\widehat{g}})$. For instance, define the extended Cartan matrix\footnote{For  $\widehat{g}=\widehat{sl_2}$, recall that $a_{ii}=2$, $a_{ij}=-2$. For the family of simply-laced affine Lie algebras, $a_{ii}=2$, $a_{ij}=-1$ for $i,j$ simply linked and $a_{ij}=0$ otherwise \cite{Kac}. For non-simply laced cases, fix coprime integers $d_i$ such that $d_ia_{ij}$ is symmetric. We define $q_i=q^{d_i}$.} $\{a_{ij}\}$ of $\widehat{g}$. The quantum universal enveloping algebra ${\cal U}_q(\widehat{g})$ \cite{Dr,Jim}  is generated by the elements $\{h_j,e_j,f_j\}$, $j=0,1,...,rank(g)$. As shown by Lusztig \cite{Luzt}, besides the fundamental defining relations, the basic generators satisfy the so-called\footnote{As usual, we denote: $
\left[ \begin{array}{c}
n \\
m 
\end{array}\right]_q
=\frac{[n]_q!}{[m]_q!\,[n-m]_q!}\ , \quad
[n]_q!=\prod_{l=1}^n[l]_q\ ,\quad
[n]_q=\frac{q^n-q^{-n}}{q-q^{-1}}, \ \ [0]_q=1 \ .$} {\it $r-th$ higher order $q-$Serre relations}:
\beqa
\sum_{k=0}^{-a_{ij}r+1} (-1)^{k}  \,  \left[ \begin{array}{c} r+1 \\  k \end{array}\right]_{q_i}  {\bold x}_i^{-a_{ij} r+1-k} {\bold x}_j^r {\bold x}_i^k&=&0\ , \qquad {\bold x}\in\{e,f\}\ , \qquad  (i\neq j)\ .\label{hqSerre}
\eeqa
For the family of generalized $q-$Onsager algebras ${\cal O}_q({\widehat{g}})$,  by analogy with (\ref{hqSerre}),  linear relations between monomials of the fundamental generators - denoted ${\textsf A}_i$ below - are expected too. For the case $\widehat{g}=\widehat{sl_2}$, this problem has been addressed\footnote{See also \cite[Problem 3.4]{IT03}.} in \cite{BV1}: $r-th$ higher order relations satisfied by the fundamental generators ${\textsf A}_0,{\textsf A}_1$ of the $q-$Onsager algebra were conjectured, with supporting evidences described in details. In the present letter, for the family of simply-laced affine Lie algebras (of the so-called ADE-type) the $r-th$ higher order relations associated with the generalized $q-$Onsager algebras ${\cal O}_q({\widehat{g}})$ are conjectured, given by (\ref{qDGADE}) with (\ref{coefffin}). They are proven for $r\leq 5$. For $r$ generic, several supporting evidences are presented. There are three main motivations for considering this problem:
\vspace{1mm}

$\bullet$ The $r-$th higher order $q-$Serre relations (\ref{hqSerre}) arise  in the construction of a basis of ${\cal U}_q(\widehat{g})$. They also appear in the discussion of the quantum Frobenius homomorphism \cite{Luzt}. For the family of generalized $q-$Onsager algebras and corresponding coideal subalgebras of ${\cal U}_q({\widehat{g}})$ \cite{Kolb}, the $r-$th higher order relations (\ref{qDGADE}) with (\ref{coefffin}) here conjectured should play a similar role. See also \cite[Problem 3.4]{IT03} for the $\widehat{sl_2}$ case. \vspace{1mm}

$\bullet$ The generalized $q-$Onsager algebras provide a representation theoretic framework for a large class of quantum integrable systems (see e.g. \cite{Uglov,BK,BK1,BB1,BF,FK}).  The $r-$th higher order relations here proposed will allow to study the hidden symmetry of boundary quantum integrable models for {\it $q$ a root of unity}, extending the analysis conducted for ${\cal U}_q({\widehat{g}})-$related models \cite{DFM,KM} . \vspace{1mm}

$\bullet$  In \cite[Section 2]{BV1}, it was shown that the polynomial structure of the $r-$th higher order relations for the $q-$Onsager algebra (namely, the case $\widehat{g}=\widehat{sl_2}$)  is fully determined from the properties of ${\textsf A}_0,{\textsf A}_1$ acting on a finite dimensional irreducible vector space (${\textsf A}_0,{\textsf A}_1$ is called a tridiagonal pair \cite{Ter03}). By analogy, suppose some of the properties of tridiagonal pairs - in particular, the spectral properties of ${\textsf A}_i$ - can be extended to higher rank affine Lie algebras. Assuming this, the two-variable polynomial generating functions of the coefficients associated with any simply-laced $\widehat{g}$ here proposed - see Definition \ref{defpolyADE} -  should be determined from the representation theory of ${\cal O}_q(\widehat{g})$ . For $\widehat{g}=a_n^{(1)}$, this hypothesis is checked explicitly at the end of the last Section.\vspace{2mm}

This letter is organized as follows. In the next Section, inspired by the analysis of \cite[Section 3]{BV1}, the $r-th$ higher order relations  satisfied by the fundamental generators of the generalized $q-$Onsager algebra ${\cal O}_q({\widehat{g}})$ are conjectured (see Conjecture 1).  They take the form (\ref{qDGADE}). Conjecture 1 is proven for $r\leq 5$, in which case the coefficients $c^{[r,p]}_k$ are explicitly obtained in terms of the deformation parameter $q$.  Then, for generic values of $r$, an inductive argument implies the existence of a set of recursive formulae - explicitly identified - for the coefficients. These recursive relations determine uniquely the coefficients  $c^{[r,p]}_k$ in terms of $c^{[r-1,p]}_k$ . Using a  computer program,  for several values of $r\geq 6$ the conjectured $r-$th higher order relations are then checked.
In Section 3, a two-variable polynomial is proposed. It is conjectured to be the generating function for the coefficients $c^{[r,p]}_k$  (see Conjecture 2), providing the closed formula (\ref{coefffin}). For $r\leq 5$, conjecture 2  is proven. For several values of $r\geq 6$, using a computer program the closed formula for $c^{[r,p]}_k$ is found to match perfectly with the solutions of the recursive relations derived in Section 2. Additional supporting evidence are also presented.
In Appendix A, some recursion relations which occur in the analysis are reported.\vspace{1mm}

 For simplicity, here we focus on the simply-laced affine Lie algebras $\widehat{g}$. Although technically more involved, non-simply laced cases can be treated along the same line. Note that the basic defining relations of the generalized $q-$Onsager algebra associated with non-simply laced cases are given in \cite{BB1}. 
\vspace{2mm}

{\bf Notations:} Here $\{ x\}$ denotes the integer part of $x$. $q$ is assumed not to be a root of unity. 

\vspace{1mm}
\section{The higher order relations for ADE-type affine Lie algebras}
Generalized $q-$Onsager algebras are extensions of the $q-$Onsager algebra to {\it higher rank} affine Lie algebras  \cite{BB1}. Inspired by the analysis of \cite{BV1}, analogues of Lusztig's higher order relations for $\cal O_q(\widehat{g})$ can be conjectured. First, we recall some basic definitions.
\begin{defn}(see \cite{BB1}){\label{defqOnsgen}} Let $\{a_{ij}\}$ be the extended Cartan matrix of the simply-laced affine Lie algebra $\widehat{g}$.
The generalized $q-$Onsager algebra $\cal O_q(\widehat{g})$ is an associative algebra with unit $1$, elements ${A}_i$ and scalars $\rho_i$. The defining relations are:
\beqa
\quad \sum_{k=0}^{2} (-1)^k \left[ \begin{array}{c} 2 \\  k \end{array}\right]_q   {\textsf A}_i^{2-k} {{\textsf A}_j} {\textsf A}_i^{k} - \rho_{i} {\textsf A}_j &=& 0\ \quad \mbox{if} \quad i,j \quad \mbox{are linked}\ ,\label{rel1g}\\
\big[{\textsf A}_i,{\textsf A}_j\big]&=&0 \quad \mbox{otherwize}\ .\label{relautre}
\eeqa
\end{defn}
\begin{rem} For $\rho_i=0$ the relations (\ref{rel1g}) reduce to the $q-$Serre relations (\ref{hqSerre}) of ${\cal U}_{q}(\widehat{g})$. 
\end{rem}
\begin{rem}
For $q=1$, the relations (\ref{rel1g}), (\ref{relautre}) coincide with the defining relations of the so-called $sl_{n+1}-$Onsager's algebra for $n> 1$ introduced by Uglov and Ivanov \cite{Uglov}. 
\end{rem}

By analogy with the $\widehat{sl_2}$ case discussed in details in \cite{BV1}, we expect the following form for the $r-$th higher order relations:

\begin{conj} Let ${\textsf A}_j$ be the fundamental generators of $\cal O_{q}(\widehat{g})$. There exist scalars $c_{k}^{[r,p]}$ in $q$ such that:
\ben
&&\sum_{p=0}^{\{\frac{r+1}{2}\}}\sum_{k=0}^{r-2p+1} (-1)^{k+p} \, \rho_i^{p} \, \,c_{k}^{[r,p]}\,  {\textsf A}_i^{r-2p+1-k} {\textsf A}_j^r {\textsf A}_i^k=0 \  \ \quad \mbox{if} \quad i,j \quad \mbox{are linked}\ .\label{qDGADE}
\een
\end{conj}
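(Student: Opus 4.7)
The plan is to prove Conjecture 1 by induction on $r$. The base case $r=1$ is exactly the defining relation (\ref{rel1g}), which fixes $c_{0}^{[1,0]} = c_{2}^{[1,0]} = 1$, $c_{1}^{[1,0]} = [2]_q$, and $c_{0}^{[1,1]} = 1$. More generally, write $R_r$ for the left-hand side of (\ref{qDGADE}); the inductive hypothesis is that $R_{r-1} = 0$ in $\mathcal{O}_q(\widehat{g})$ with prescribed $c_{k}^{[r-1,p]}$.

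For the inductive step, I would solve (\ref{rel1g}) in the form $\textsf{A}_j \textsf{A}_i^2 = [2]_q\, \textsf{A}_i \textsf{A}_j \textsf{A}_i - \textsf{A}_i^2 \textsf{A}_j + \rho_i \textsf{A}_j$, so that any monomial $\textsf{A}_j \textsf{A}_i^n$ can be reexpanded as a linear combination of ordered monomials $\textsf{A}_i^a \textsf{A}_j \textsf{A}_i^{n-a}$, plus $\rho_i$-corrections of lower $\textsf{A}_i$-degree. Multiplying $R_{r-1}$ by $\textsf{A}_j$ on the right (and, by symmetry, on the left) and walking every freshly introduced $\textsf{A}_j$ inward until it fuses with the $\textsf{A}_j^{r-1}$-block brings the result into exactly the sandwich shape $\textsf{A}_i^{r-2p+1-k}\textsf{A}_j^{r}\textsf{A}_i^{k}$ appearing in (\ref{qDGADE}). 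Matching monomial coefficients against a level-$r$ ansatz then yields a linear system expressing each $c_{k}^{[r,p]}$ in terms of the previously determined $c_{k}^{[r-1,p]}$ --- these are the recursive formulae announced in the introduction and collected in Appendix A.

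The main obstacle is \emph{consistency}. Each commutation of $\textsf{A}_j$ past $\textsf{A}_i$ produces residual monomials in which the $\textsf{A}_j$'s are scattered between $\textsf{A}_i$'s rather than bundled into one $\textsf{A}_j^{r}$-block, and these residuals must cancel for $R_r = 0$ to hold as a genuine identity in $\mathcal{O}_q(\widehat{g})$. Because the sandwich ansatz (\ref{qDGADE}) involves fewer free parameters than there are monomials produced by the induction, the resulting system is a priori over-determined. I would attack the redundant equations by exploiting the manifest palindromic symmetry $k \leftrightarrow r-2p+1-k$ inherited from the $q$-binomial structure of (\ref{rel1g}), which should identify pairs of constraints and bring the system down to the recursions of Appendix A. For $r \leq 5$ the surviving identities are finite in number and can be verified by direct computation (with computer-algebra assistance) using only (\ref{rel1g}) and (\ref{relautre}), simultaneously yielding the explicit $c_{k}^{[r,p]}$ as rational functions of $q$. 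For general $r$, the most promising route is to show that the two-variable generating function (\ref{coefffin}) posited in Conjecture 2 satisfies the recursions of Appendix A together with all compatibility relations: this would reduce Conjecture 1 to a polynomial identity in the generating function, which I expect to be the genuinely hard step.
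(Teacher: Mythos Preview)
Your overall plan is sound in spirit and correctly isolates the crux, but you should be aware that the paper does \emph{not} claim a proof of Conjecture~1 for general $r$: it establishes the cases $r\leq 5$ by direct computation and then offers only inductive \emph{evidence} (explicit recursive formulae together with computer verification for many $r\geq 6$). The residual-cancellation problem you flag at the end is precisely the step the paper leaves open.

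There are two technical discrepancies with the paper's actual procedure. First, for $r\leq 5$ the paper does not argue inductively from $R_{r-1}$. It takes the full ansatz (\ref{qDGADE}) at level $r$ with undetermined $c_k^{[r,p]}$, reduces every monomial $\textsf{A}_i^{n}\textsf{A}_j^{r}\textsf{A}_i^{m}$ with $n\geq 2$ via (\ref{rel1g}) until the leftmost $\textsf{A}_i$-block has length at most $1$, and then demands that all coefficients of the ``bad'' monomials $\textsf{A}_i^{p}\textsf{A}_j\textsf{A}_i^{s}\textsf{A}_j\textsf{A}_i^{t}$ vanish; the resulting linear system is solved explicitly and found to have a unique solution. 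Second, for the inductive part (general $r$) the paper's mechanism is not $R_{r-1}\textsf{A}_j$ but the combination $f_r^{ADE}=\textsf{A}_i^{r+2}\textsf{A}_j^{r+1}-c_1^{[r+1,0]}\textsf{A}_i^{r+1}\textsf{A}_j^{r+1}\textsf{A}_i$, namely the first two terms of the putative $R_{r+1}$. These monomials are reduced using both (\ref{rel1g}) and the level-$r$ relation (the latter after left-multiplication by $\textsf{A}_i$), together with the explicit rewriting $\textsf{A}_i^{m}\textsf{A}_j=\sum_p\rho_i^p\big(\eta_0^{(m,p)}\textsf{A}_i\textsf{A}_j\textsf{A}_i^{m-1-2p}+\eta_1^{(m,p)}\textsf{A}_j\textsf{A}_i^{m-2p}\big)$. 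Note in particular that your identity $R_{r-1}\textsf{A}_j=0$ has total $\textsf{A}_i$-degree at most $r$ and so cannot by itself produce the leading monomial $\textsf{A}_i^{r+1}\textsf{A}_j^{r}$ of $R_r$; an extra left-multiplication by $\textsf{A}_i$ is needed, which is exactly how the paper arrives at $f_r^{ADE}$. Moreover, since $R_{r-1}\textsf{A}_j$ is already zero in $\mathcal{O}_q(\widehat g)$, rewriting it only tells you that (sandwich part) $=-$(residual part), not that either vanishes separately; to pin down the $c_k^{[r,p]}$ one must combine several such identities, which is what the paper does through $f_r^{ADE}$.

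On the consistency issue: your suggestion that palindromic symmetry and the generating function of Conjecture~2 should close the over-determined system is natural, but the paper does not carry this out. Conjecture~2 is supported by the same kind of evidence (agreement for $r\leq 5$, computer checks for larger $r$, and compatibility with the spectrum of $\textsf{A}_i$ in explicit representations) and is itself left open; reducing Conjecture~1 to Conjecture~2 would therefore not, on the paper's account, constitute a proof.
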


In the first part of this Section, starting from the fundamental relations \label{rel2g} of  $\cal O_{q}(\widehat{g})$, examples of higher order relations of the form  (\ref{qDGADE}) are derived, thus proving the conjecture for $r\leq 5$. The coefficients $c_{k}^{[r,p]}$ are explicitly obtained as rational functions of the deformation parameter $q$. In a second part, the general structure of the $r-th$ higher order relations is studied. Using an inductive argument, it is shown that the coefficients 
$c_{k}^{[r,p]}$ are uniquely determined recursively in terms of $c_{k}^{[r-1,p]}$. Using these relations, for several values of $r\geq 6$ the conjecture is checked using a computer program. Note that the analysis here presented is similar to \cite[Section 3]{BV1} for the $\widehat{sl_2}$ case.\vspace{1mm}

\subsection{Proof of the $r-$th higher order relations for $r\leq 5$} For $r=1$, the relations (\ref{qDGADE}) are the defining relations of the generalized $q-$Onsager algebra $\cal O_{q}(\widehat{g})$.  Assume ${\textsf A}_i$ are the fundamental generators of $\cal O_{q}(\widehat{g})$. 
To derive the simplest example of higher order relations, we are looking for a linear relation between monomials of the type 
\beqa
{\textsf A}_i^n {\textsf A}_j^2 {\textsf A}_i^m \qquad \mbox{with}\qquad n+m=3,1 \ . 
\eeqa
Suppose it is of the form (\ref{qDGADE}) for $r=2$ with yet unknown coefficients $c_{k}^{[r,p]}$. We now show $c_{k}^{[r,p]}$ are uniquely determined in terms of $q$. First, according to the defining relations (\ref{rel1g}) the monomial ${\textsf A}_i^2{\textsf A}_j$ can be ordered as:
\beqa
{\textsf A}_i^2{\textsf A}_j = [2]_q {\textsf A}_i{\textsf A}_j{\textsf A}_i -  {\textsf A}_j{\textsf A}_i^2 + \rho_i {\textsf A}_j\  .\label{mon1}
\eeqa
Multiplying from the left and/or right by ${\textsf A}_i,{\textsf A}_j$ ($i\neq j$), the corresponding monomials can be ordered as follows: each time a monomial of the form ${\textsf A}_i^n {\textsf A}_j^2{\textsf A}_i^m$ with $n\geq 2$ arise, it is reduced using (\ref{mon1}). For instance, one has:
\beqa
{\textsf A}_i^3 {\textsf A}_j = ([2]^2_q-1) {\textsf A}_i{\textsf A}_j{\textsf A}_i^2  - [2]_q {\textsf A}_jA^3_i + \rho_i ([2]_q {\textsf A}_j{\textsf A}_i + {\textsf A}_i{\textsf A}_j) \ .
\eeqa
Now, observe that the two monomials in  (\ref{qDGADE}) for $r=2$ and $k=0,1,p=0$ can be written as ${\textsf A}_i^3 {\textsf A}_j^2\equiv ({\textsf A}_i^3 {\textsf A}_j) {\textsf A}_j$ and ${\textsf A}_i^2 {\textsf A}_j^2{\textsf A}_i \equiv({\textsf A}_i^2 {\textsf A}_j) {\textsf A}_j {\textsf A}_i$. Following the ordering prescription, each of these monomials can be reduced as a combination of monomials of the type:
\beqa
&&\quad {\textsf A}_i^n{{\textsf A}_j^2}{\textsf A}_i^m \ \qquad \ \mbox{with} \quad  \ n\leq 1\ , \ n+m=3,1\ ,\label{mongen}\\
 &&\quad {\textsf A}_i^p {\textsf A}_j {\textsf A}_i^s {\textsf A}_j {\textsf A}_i^t \quad \  \mbox{with} \quad  \ p\leq 1\ , \ s\geq 1\ ,\ p+s+t=3,1 \ .\label{badterm}
\eeqa
Plugging the reduced expressions of ${\textsf A}_i^3 {\textsf A}_j^2$ and ${\textsf A}_i^2 {\textsf A}_j^2{\textsf A}_i$  in (\ref{qDGADE}) for $r=2$, one finds that all monomials of the form (\ref{badterm}) cancel provided a simple  system of equations for the coefficients $c_{k}^{[r,p]}$ is satisfied. One finds that the solution of this system is unique, given by:
\beqa
&& c^{[2,0]}_k= \left[ \begin{array}{c} 3 \\  k \end{array}\right]_q \ \quad \mbox{for}\quad k=0,1,2,3 \ , \quad \mbox{and}\quad c^{[2,1]}_0=c^{[2,1]}_1=q^2+q^{-2}+2 \nonumber\ .
\eeqa

For $r=3,4,5$, we proceed similarly: the monomials entering in the relations (\ref{qDGADE}) are ordered according to the prescription described above. Given $r$, the  reduced expression of the corresponding relation (\ref{qDGADE}) holds provided the coefficients $c_{k}^{[r,p]}$  satisfy a system of equation which solution is unique. In each case, one finds:
\beqa
 c^{[r,0]}_k= \left[ \begin{array}{c} r+1 \\  k \end{array}\right]_q \ \quad \mbox{for}\quad k=0,...,r+1 \ ,\  r=3,4,5\ ,\label{qbin}
\eeqa
whereas for $p\geq 1$, the other coefficients are such that $c_{k}^{[r,p]}=c_{r-2p+1-k}^{[r,p]}$, given by:
\beqa
\mbox{\bf Case $r=3$:}
&& c^{[3,1]}_0= {q^4} + 2q^2+4+2q^{-2} +q^{-4} \ , \qquad c^{[3,1]}_1= [4]_q(q^2 + q^{-2}+3) \ ,\nonumber\\
&& c^{[3,2]}_0= ({q^2} + q^{-2} +1)^2 \ ; \ \nonumber \\
\mbox{\bf Case $r=4$:}&& c_0^{[4,1]}=(q^4+3+q^{-4})[2]_q^2\ ,\ c_1^{[4,1]}=[5]_q[3]_q[2]_q^2,\nonumber \\
&&c_0^{[4,2]}=(q^2+q^{-2})^2[2]_q^4 \ ;\nonumber \\
\mbox{\bf Case $r=5$:}\nonumber \\
c_0^{[5,1]}&=&q^8+2q^6+4q^4+6q^2+9+6q^{-2}+4q^{-4}+2q^{-6}+q^{-8},\nonumber\\
c_1^{[5,1]}&=&[6]_q[3]_q^{-1}(q^8+4q^6+8q^4+14q^2+16+14q^{-2}+8q^{-4}+4q^{-6}+q^{-8}),\nonumber \\
c_2^{[5,1]}&=&[6]_q[2]_q^{-1}[5]_q(q^4+3q^2+6+3q^{-2}+q^{-4}),\nonumber \\
c_0^{[5,2]}&=& q^{12}+4q^{10}+11q^8+20q^6+31q^4+40q^2+45+40q^{-2}+31q^{-4}+20q^{-6}+11q^{-8}+4q^{-10}+q^{-12},\nonumber\\
c_1^{[5,2]}&=&[6]_q[3]_q^{-1}(q^{10}+6q^8+17q^6+32q^4+47q^2+53+47q^{-2}+32q^{-4}+17q^{-6}+6q^{-8}+q^{-10}),\nonumber \\
c_0^{[5,3]}&=&[3]_q^2[5]_q^2\ . \nonumber
\eeqa

\subsection{Generic case $r$} 
Above examples suggest that $r-$th higher order relations of the form (\ref{qDGADE}) exist for generic values of $r$. The aim of this subsection is to provide strong supporting evidences for conjecture 1 with $r\geq 6$. The analysis essentially follows \cite[Section 3]{BV1}, so we refer the reader to this work for further details. First, assume that given $r$, the relation (\ref{qDGADE}) exists and that all coefficients   $c_{k}^{[r,p]}$ are already known in terms of $q$. The relation (\ref{qDGADE}) for $r\rightarrow r+1$ is then considered.  In this case, the combination
\beqa
f^{ADE}_r({\textsf A}_i,{\textsf A}_j)= {\textsf A}_i^{r+2}{\textsf A}_j^{r+1}  - c_{1}^{[r+1,0]} {\textsf A}_i^{r+1}{\textsf A}_j^{r+1}{\textsf A}_i
\eeqa
which corresponds to the two first terms in  (\ref{qDGADE})  is introduced. By analogy with the procedure described in \cite{BV1}, the monomials ${\textsf A}_i^{r+2}{\textsf A}_j^{r+1}$ and ${\textsf A}_i^{r+1}{\textsf A}_j^{r+1}{\textsf A}_i$  are reduced using (\ref{rel1g}) and  (\ref{qDGADE}). The  ordered expression of the first monomial follows:
\beqa
{\textsf A}_i^{r+2}{\textsf A}_j^{r+1}&=&\sum_{k=2}^{r+2}{(-1)^{k+1} M_k^{(r,0)}{\textsf A}_i^{r+2-k}{\textsf A}_j^r{\textsf A}_i^k{\textsf A}_j} \nonumber \\ &&+\sum_{p=1}^{\{\frac{r+1}{2}\}}\sum_{k=0}^{r+2-2p}{(-1)^{p+k+1}\rho_i^pM_k^{(r,p)}{\textsf A}_i^{r+2-2p-k}{\textsf A}_j^r{\textsf A}_i^k{\textsf A}_j}\ ,\nonumber
\eeqa
where the coefficients $M_k^{(r,p)}$ are expressed in terms of the $c^{[r,p]}_j$, as reported in Appendix A. Obviously, an ordered expression for the second monomial immediately follows from (\ref{qDGADE}). As a consequence, the whole combination $f^{ADE}_r({\textsf A}_i,{\textsf A}_j)$ can be further reduced. As an intermediate step, note that one uses  (\ref{rel1g}) to obtain:
\beqa
{\textsf A}_i^{2n+1}{\textsf A}_j&=&\sum_{p=0}^n{\rho_i^p(\eta_0^{(2n+1,p)}{\textsf A}_i{\textsf A}_j{\textsf A}_i^{2n-2p}+\eta_1^{(2n+1,p)}{\textsf A}_j{\textsf A}_i^{(2n-2p+1)})}\ , \nonumber \\
{\textsf A}_i^{2n+2}{\textsf A}_j&=&\sum_{p=0}^n{\rho_i^p(\eta_0^{(2n+2,p)}{\textsf A}_i{\textsf A}_j{\textsf A}_i^{2n+1-2p}+\eta_1^{(2n+2,p)}{\textsf A}_j{\textsf A}_i^{2n+2-2p})}+\rho_i^{n+1}{\textsf A}_j\ , \nonumber 
\eeqa
where the coefficients  $\eta_j^{(r,p)}$ are determined recursively, according to the relations given in Appendix A. The ordered expression of $f^{ADE}_r({\textsf A}_i,{\textsf A}_j)$ is then studied. If conjecture 1 holds for $r$ generic, then all coefficients of monomials of the type  ${\textsf A}_i^{p} {\textsf A}_j^r {\textsf A}_i^s {\textsf A}_j {\textsf A}_i^t$ (with $p+s+t=r,r-2,...,3,1$ if $r$ is odd, and $p+s+t=r,r-2,...,4,2$ if $r$ is even) must vanish.  Then, by straightforward calculation it implies that the coefficients  $c_{k}^{[r+1,p]}$ satisfy a finite system of equations which uniquely determine the coefficients  $c_{k}^{[r+1,p]}$ in terms of $c_{k'}^{[r,p']}$. Explicitly, we have the following results. Firstly, the coefficient of the monomial $A_i^{r+1}A_j^{r+1}A_i$ is found, given by: 
\[c^{[r+1,0]}_1=\left[ \begin{array}{c} r+1 \\ 1 \end{array}\right]_q\ .\]
According to the parity of $r$ and replacing $r+1\rightarrow r$, for the other coefficients one finds\footnote{Let $j,m,n$ be integers, we write $j=\overline{m,n}$  for $j=m,m+1,...,n-1,n$.}:\vspace{4mm}

{\bf \underline{Case $r$ odd:}} For $r=2t+1$ and $p=0$:
\beqa
c^{[2t+1,0]}_2&=& M^{(2t,0)}_2\eta^{(2,0)}_1,\nonumber \\
c^{[2t+1,0]}_{2h}&=& M^{(2t,0)}_{2h}\eta^{(2h,0)}_1+c^{[2t+1,0]}_1c^{[2t,0]}_{2h-1}\eta^{(2h-1,0)}_1,\qquad h=\overline{2,t+1},\nonumber \\
c^{[2t+1,0]}_{2h+1}&=& M^{(2t,0)}_{2h+1}\eta^{(2h+1,0)}_1+c^{[2t+1,0]}_1c^{[2t,0]}_{2h}\eta^{(2h,0)}_1, \qquad h=\overline{1,t}.\nonumber
\eeqa
Using the recursion relations given in Appendix A, it is possible to show that these coefficients can be simply written in terms of $q-$binomials: 
\beqa
 c_{k}^{[r,0]} =  \left[ \begin{array}{c} r+1 \\ k \end{array}\right]_q\ .    \label{formbinom}
\eeqa
Other coefficients $c^{[2t+1,p]}_0$ for $p \geq 1$ are determined by the following recursion relations: 
\beqa
c^{[2t+1,t+1]}_0&=&\sum_{p=0}^t{(-1)^{p+t+1}M^{(2t,p)}_{2(t+1-p)}},\nonumber \\
c^{[2t+1,1]}_0&=& -M_2^{(2t,0)}+M_0^{(2t,1)},\nonumber \\
c^{[2t+1,h]}_0&=& \sum_{p=0}^{h}{(-1)^{p+h}M^{(2t,p)}_{2(h-p)}}, \qquad h=\overline{2,t},\nonumber 
\eeqa
\beqa
c^{[2t+1,1]}_1&=&-(M^{(2t,0)}_3\eta^{(3,1)}_1-c^{[2t+1,0]}_1(-c_2^{[2t,0]}+c_0^{[2t,1]})),\nonumber \\
c^{[2t+1,h]}_1&=&\sum_{p=0}^{h-1}{(-1)^{p+h}M^{(2t,p)}_{2(h-p)+1}\eta^{(2(h-p)+1,h-p)}_1} \nonumber \\& &+c^{[2t+1,0]}_1\sum_{p=0}^{h}{(-1)^{p+h}c^{[2t,p]}_{2(h-p)}}, \qquad h=\overline{2,t},\nonumber
\eeqa
\beqa
c^{[2t+1,1]}_2&=& -M_4^{(2t,0)}\eta^{(4,1)}_1+M^{(2t,1)}_2\eta^{(2,0)}_1-c^{[2t+1,0]}_1c_3^{[2t,0]}\eta^{(3,1)}_1,\nonumber \\
c^{[2t+1,l]}_{2h-2l+1}&=&\sum_{p=0}^l{(-1)^{p+l}M^{(2t,p)}_{2(h-p)+1}}\eta_1^{(2(h-p)+1,l-p)} \nonumber\\& &+c^{[2t+1,0]}_1\sum_{p=0}^l{(-1)^{p+l}c^{[2t,p]}_{2(h-p)}\eta_1^{(2(h-p),l-p)}}, \qquad h=\overline{2,t}, \quad l=\overline{1,h-1},\nonumber \\
c^{[2t+1,l]}_{2h-2l}&=&\sum_{p=0}^l{(-1)^{p+l}M_{2(h-p)}^{(2t,p)}\eta_1^{(2(h-p),l-p)}}\nonumber\\& &+c^{[2t+1,0]}_1\sum_{p=0}^{min(l,h-2)}{(-1)^{p+l}c^{[2t,p]}_{2(h-p)-1}\eta_1^{(2(h-p)-1,l-p)}}, \qquad h=\overline{3,t+1}, \quad l=\overline{1,h-1}.\nonumber
\eeqa

{\bf \underline{Case $r$ even:}}  For $r=2t+2$, the coefficients $c^{[2t+2,0]}_j$ are given by:
\beqa
c^{[2t+2,0]}_2&=& M_2^{(2t+1,0)}\eta_1^{(2,0)},\nonumber \\
c^{[2t+2,0]}_{2h+1}&=& M_{2h+1}^{(2t+1,0)}\eta^{(2h+1,0)}_1+c^{[2t+2,0]}_1c_{2h}^{[2t+1,0]}\eta_1^{(2h,0)}, \qquad h=\overline{1,t+1},\nonumber \\
c^{[2t+2,0]}_{2h}&=& M_{2h}^{(2t+1,0)}\eta^{(2h,0)}_1+c^{[2t+2,0]}_1c_{2h-1}^{[2t+1,0]}\eta_1^{(2h-1,0)}, \qquad h=\overline{2,t+1}.\nonumber
\eeqa
According to the relations given in Appendix A,  one shows that $c^{[r,0]}_k$ simplify to $q-$binomials (\ref{formbinom}). For $p \geq 1 $, the recursive formulae for all other coefficients are given by:
\beqa
c_0^{[2t+2,1]}&=&-M^{(2t+1,0)}_2+ M_0^{(2t+1,1)},\nonumber \\
c_0^{[2t+2,h]}&=&\sum_{p=0}^{h}{(-1)^{p+h}M_{2(h-p)}^{(2t+1,p)}}, \qquad h=\overline{2,t+1},\nonumber \\
c_1^{[2t+2,1]}&=&-M_3^{(2t+1,0)}\eta_1^{(3,1)}+c_1^{[2t+2,0]}(-c^{[2t+1,0]}_2+c_0^{[2t+1,1]}),\nonumber 
\eeqa
\beqa
c_1^{[2t+2,h]}&=&\sum_{p=0}^{h-1}{(-1)^{p+h}M_{2(h-p)+1}^{(2t+1,p)}\eta_1^{(2(h-p)+1,h-p)}} \nonumber\\ & &+c_1^{[2t+2,0]}\sum_{p=0}^{h}{(-1)^{p+h}c_{2(h-p)}^{[2t+1,p]}}, \qquad h=\overline{2,t+1},\nonumber \\
c_2^{[2t+2,1]}&=&-M_4^{(2t+1,0)}\eta_1^{(4,1)}+M_2^{(2t+1,1)}\eta^{(2,0)}_1-c_1^{[2t+2,0]}c_3^{[2t+1,0]}\eta_1^{(3,1)},\nonumber
\eeqa
\beqa
c_{2h-2l}^{[2t+2,l]}&=&\sum_{p=0}^l{(-1)^{p+l}M_{2(h-p)}^{(2t+1,p)}\eta_1^{(2(h-p),l-p)}}\nonumber \\& & +c_1^{[2t+2,0]}\sum_{p=0}^{min(l,h-2)}{(-1)^{p+l}c_{2(h-p)-1}^{[2t+1,p]}\eta_1^{(2(h-p)-1,l-p)}}, \qquad h=\overline{3,t+1}, l=\overline{1,h-1},\nonumber \\
c_{2h-2l+1}^{[2t+2,l]}&=&\sum_{p=0}^l{(-1)^{p+l}M_{2(h-p)+1}^{(2t+1,p)}\eta_1^{(2(h-p)+1,l-p)}}) \nonumber \\&&+c_1^{[2t+2,0]}\sum_{p=0}^l{(-1)^{p+l}c_{2(h-p)}^{[2t+1,p]}\eta_1^{(2(h-p),l-p)}}, \qquad h=\overline{2,t+1}, l=\overline{1,h-1}.\nonumber
\eeqa
For practical purpose, for any positive integer $r$ all coefficients $c^{[r,p]}_k$ entering in the $r-th$ higher order relations (\ref{qDGADE}) can be computed recursively. Using a computer program, we have checked that the relation (\ref{qDGADE})  holds for a large number of values $r\geq 6$ provided the coefficients satisfy above recursive formulae. In addition, let us point out that that setting $\rho_i=0$, the relations (\ref{qDGADE}) reproduce the higher order $q-$Serre relations (\ref{hqSerre})  with ${\bold x}\rightarrow {\textsf A}$. 
\vspace{1mm}

\section{A two-variable polynomial generating function}
For the $q-$Onsager algebra, in \cite[Section 2]{BV1} for finite dimensional irreducible representations it was shown that the coefficients $c^{[r,p]}_k$ entering in the $r-th$ higher order relations follow from a two-variable generating function. Here, for any simply-laced affine Lie algebras we propose a two-variable generating function for the coefficients too (see conjecture 2). Various checks are done at the end of this Section, which support the proposal.  
\begin{defn}\label{defpolyADE} Let ${r\in \mathbb Z}^+$. Let $x,y$ be commuting indeterminates and $\rho$ a scalar. To any simply-laced affine Lie algebra  $\widehat{g}$, we associate the polynomial generating function  $p^{ADE}_r(x,y)$ such that:
\beqa
p_{2t+1}(x,y)&=& \prod_{l=1}^{t+1} \left(x^2- \frac{[4l-2]_q}{[2l-1]_q} xy+y^2 - \rho[2l-1]_q^2  \right) \ ,\label{polyprod1}\\
p_{2t+2}(x,y)&=& (x-y)\prod_{l=1}^{t+1} \left(x^2-  \frac{[4l]_q}{[2l]_q} xy+y^2 - \rho[2l]_q^2  \right) \ .\label{polyprod2}
\eeqa
\end{defn}
\begin{lem}\label{lem1} 
 The polynomial $p_r^{ADE}(x,y)$ can be expanded as:
\beqa
p_r^{ADE}(x,y) = \sum_{p=0}^{\{\frac{r+1}{2}\}}\ \sum_{k=0}^{r-2p+1} (-1)^{k+p}   \rho^{p}\,c_{k}^{[r,p]}\,  x^{r-2p+1-k} y^k\  \label{polyADE}
\eeqa
where the coefficients are given by:
\beqa
\qquad &&c_{k}^{[r,p]} =  \sum_{l=0}^{ \{k/\alpha\}}\frac{(  \{\frac{r+1}{2}\} - k +\alpha l -p)!}{    (\{\frac{\alpha l}{2}\})!( \{\frac{r+1}{2}\} - k + \alpha l - p -\{\frac{\alpha l}{2}\})!}
    \sum_{{\cal P}}  [s_1]^2_{q}...[s_p]^2_{q}  \frac{[2s_{p+1}]_{q}...[2s_{p+k-\alpha l}]_{q}}{[s_{p+1}]_{q}...[s_{p+k-\alpha l}]_{q}}  \label{coefffin}
\eeqa
\beqa 
\mbox{with}\quad \ \left\{\begin{array}{cc}
\!\!\! \!\!\! \!\!\! \!\!\!  \!\!\! \!\!\! \!\!\! \!\!\!   \!\!\! \!\!\! \!\!\! \!\!\! \quad \quad \quad \quad \quad \quad \quad k= \overline{0,\{\frac{r+1}{2}\}}\ , \quad s_i\in\{   r-2\{\frac{r-1}{2}\},\dots,r-2,r\}\ ,\\
{\cal P}: \begin{array}{cc} \ \ s_1<\dots<s_p\ ;\quad \ s_{p+1}<\dots<s_{p+k - \alpha l}\ ,\\
 \{s_{1},\dots,s_{p}\} \cap \{s_{p+1},\dots,s_{p+k-\alpha l}\}=\emptyset \end{array} ,\\
\alpha = \left\{ \begin{array}{cc} 1 \quad \text{if $r$ is even,} \\ \!\!\!\!2 \quad \text{if $r$ is odd}  \end{array}\right .\
\end{array}\right.\ .\nonumber
\eeqa
\end{lem}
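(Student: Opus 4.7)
The plan is to prove Lemma 3.1 by direct expansion of the product formulas in Definition 3.1 and to identify each contribution to the coefficient of $x^{r+1-2p-k}y^k$ with the combinatorial sum in \er{coefffin}.

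First I would handle the odd case $r=2t+1$, writing $n=t+1=\{\frac{r+1}{2}\}$. From each of the $n$ quadratic factors
\[
x^2 - \frac{[4l-2]_q}{[2l-1]_q}\,xy + y^2 - \rho\,[2l-1]_q^{2}
\]
one picks exactly one of the four terms. For each $l\in\{1,\dots,n\}$ I would record whether it is placed in $S$ (picking $-\rho[2l-1]_q^2$), in $T$ (picking the $xy$-term), in $A$ (picking $x^2$), or in $B$ (picking $y^2$); this yields the contribution
\[
(-1)^{|S|+|T|}\rho^{|S|}\,x^{2|A|+|T|}\,y^{2|B|+|T|}\,\prod_{s\in S}[2s-1]_q^{2}\,\prod_{t\in T}\frac{[2(2t-1)]_q}{[2t-1]_q}.
\]
Imposing $|S|=p$ and matching the target monomial $x^{r+1-2p-k}y^k$ forces $|T|=k-2l$ and $|B|=l$ for some integer $0\le l\le\{k/2\}$, while $|A|=n-p-k+l$. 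With $S$ and $T$ fixed, the remaining $n-p-k+2l$ indices are freely distributed between $A$ and $B$, producing the binomial factor $\binom{n-p-k+2l}{l}$ that appears in \er{coefffin} once $\alpha=2$ is substituted (since $\{\alpha l/2\}=l$ and $\{k/\alpha\}=\{k/2\}$). The substitution $s=2\sigma-1$ turns the inner sum over $S\sqcup T$ into a sum over disjoint tuples $(s_1<\dots<s_p)$ and $(s_{p+1}<\dots<s_{p+k-2l})$ drawn from $\{1,3,\dots,r\}=\{r-2\{\frac{r-1}{2}\},\dots,r-2,r\}$, matching the constraint $\mathcal{P}$ in \er{coefffin}. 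Since $(-1)^{|T|}=(-1)^{k-2l}=(-1)^k$, the overall sign is $(-1)^{p+k}$, consistent with \er{polyADE}.

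Next I would handle the even case $r=2t+2$, splitting $p_r^{ADE}(x,y)=(x-y)\tilde p(x,y)$, where $\tilde p$ is the product of $n=t+1$ quadratic factors now with $d_l=[2l]_q$ and $c_l=[4l]_q/[2l]_q$. The same expansion shows that the coefficient of $x^{2n-2p-k'}y^{k'}$ in $\tilde p$ equals
\[
(-1)^{p+k'}\sum_{m\ge 0}\binom{n-p-k'+2m}{m}\,\Sigma_{(k'-2m)},
\]
where $\Sigma_{(j)}$ sums over disjoint $S,T\subset\{1,\dots,n\}$ with $|S|=p$, $|T|=j$ the product $\prod_{s\in S}[2s]_q^{2}\prod_{t\in T}[4t]_q/[2t]_q$, and the substitution $s=2\sigma$ sends these indices into $\{2,4,\dots,r\}=\{r-2\{\frac{r-1}{2}\},\dots,r-2,r\}$. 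Multiplication by $x-y$ picks the contributions $k'=k$ (from the $x$-piece) and $k'=k-1$ (from the $-y$-piece, whose sign $(-1)^{p+k-1}$ combined with the overall minus becomes $(-1)^{p+k}$). Reindexing by $l=2m$ in the $k'=k$ sum and by $l=2m+1$ in the $k'=k-1$ sum merges the two contributions into a single sum over $l=0,\dots,k$ with binomial $\binom{n-p-k+l}{\{l/2\}}$ and inner factor $\Sigma_{(k-l)}$. This is precisely \er{coefffin} with $\alpha=1$, so that $\{\alpha l/2\}=\{l/2\}$, $\{k/\alpha\}=k$, and $|T|=k-l$.

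The main obstacle is the bookkeeping in the even case: the $(x-y)$ prefactor forces interleaving of two sums of different parities in the index $|T|$, and one must verify that together they assemble into the single unified sum of \er{coefffin} with the correct binomial $\binom{n-p-k+l}{\{l/2\}}$ and the correct constraint $|T|=k-\alpha l$. A secondary point is to check the boundary cases $k=0$, $k=\{\frac{r+1}{2}\}$, and maximal $p=\{\frac{r+1}{2}\}$, where some of the sets $S,T,A,B$ collapse to $\emptyset$; in all of these cases the inner sum over $\mathcal{P}$ reduces to an empty product equal to $1$, and the binomial degenerates consistently. Once this matching is carried out, Lemma 3.1 follows purely from the algebraic expansion of the two product formulas.
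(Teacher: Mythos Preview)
Your approach is correct and actually more explicit than the paper's. The paper's proof of this lemma consists of a single line: ``By induction. See \cite{BV1} for the $\widehat{sl_2}$ case.'' In other words, the authors pass from $p_r$ to $p_{r+2}$ (equivalently from $t$ to $t+1$) by multiplying in one additional quadratic factor and tracking how the formula \er{coefffin} updates, deferring details to their earlier paper.

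Your argument is a direct, non-inductive expansion of the full product. Partitioning the $n=\{\frac{r+1}{2}\}$ factors into the four classes $S,T,A,B$ according to which monomial is selected, and then observing that for fixed $S,T$ the remaining $n-p-k+\alpha l$ indices are distributed freely between $A$ and $B$, immediately produces the binomial $\binom{n-p-k+\alpha l}{\{\alpha l/2\}}$ and the inner sum over $\mathcal P$. The handling of the even case---splitting off $(x-y)$, expanding $\tilde p$, and merging the two contributions $k'=k$ and $k'=k-1$ via the reindexing $l=2m$ and $l=2m+1$---is the nontrivial bookkeeping step, and your matching of the binomials $\binom{n-p-k+2m}{m}$ and $\binom{n-p-k+2m+1}{m}$ with $\binom{n-p-k+l}{\{l/2\}}$ for $l$ even and odd respectively is exactly right, as is the sign check. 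Compared to the inductive route, your method has the advantage of being self-contained and of making the combinatorial meaning of each ingredient in \er{coefffin} transparent (the binomial counts the $A/B$ splittings, the sum over $\mathcal P$ records the $S/T$ choices); the inductive approach, by contrast, would be shorter once the base case and the effect of multiplying by a single quadratic factor are understood, but it hides this structure.
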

\begin{proof}
By induction. See \cite{BV1} for the $\widehat{sl_2}$ case.
\end{proof}

We claim that the two-variable polynomial given by (\ref{polyprod1})-(\ref{polyprod2}) is the generating function for the coefficients $c^{[r,p]}_k$  entering in the higher order relations (\ref{qDGADE}). Using Lemma (\ref{lem1}),  this leads to:
\begin{conj}
 The coefficients $c^{[r,p]}_k$ in (\ref{qDGADE}) are given by (\ref{coefffin}).
\end{conj}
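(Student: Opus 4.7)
The strategy is to exploit the uniqueness result established in Section 2.2: the coefficients $c^{[r,p]}_{k}$ are uniquely determined - level by level in $r$ - by the recursive formulae derived there. Combined with Lemma \ref{lem1}, which identifies the right-hand side of (\ref{coefffin}) with the coefficients of the polynomial generating function $p^{ADE}_r(x,y)$, it therefore suffices to verify that these polynomial coefficients satisfy the Section 2.2 recursions. The proof would thus proceed by induction on $r$, with the base case $r\leq 5$ furnished by the explicit computations of Section 2.1 compared against the evaluations of (\ref{coefffin}) at those values - a direct but routine check.

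For the inductive step I would exploit the multiplicative structure of the generating function. By Definition \ref{defpolyADE}, the quotient $p^{ADE}_{2t+3}/p^{ADE}_{2t+1}$ (and analogously $p^{ADE}_{2t+4}/p^{ADE}_{2t+2}$) is a single quadratic factor of the form $x^2-(q^{2l-1}+q^{-(2l-1)})xy+y^2-\rho[2l-1]_q^2$ (or its even-index analogue, since $[4l-2]_q/[2l-1]_q = q^{2l-1}+q^{-(2l-1)}$), while $p^{ADE}_{2t+2}/p^{ADE}_{2t+1}$ contributes the single linear factor $(x-y)$. These factorizations translate into explicit closed recursions for the coefficients extracted from $p^{ADE}_r(x,y)$ between consecutive and next-to-consecutive levels of $r$. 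The inductive step then consists of matching these polynomial-level recursions with the recursions of Section 2.2, which express $c^{[r+1,p]}_k$ in terms of $c^{[r,p']}_{k'}$ through the auxiliary quantities $M^{(r,p)}_k$ and $\eta^{(r,p)}_j$ given in Appendix A.

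The main obstacle is carrying out this matching explicitly: both $M^{(r,p)}_k$ and $\eta^{(r,p)}_j$ are defined by nested recursions producing intricate combinations of $q$-integers, and translating them into the sums over subsets $\mathcal P$ appearing in (\ref{coefffin}) requires a delicate combinatorial identity on sums of products of $q$-integers. However, a key simplification is that the relations (\ref{rel1g})-(\ref{relautre}) imply that all monomial manipulations involving only a fixed pair of linked generators $({\textsf A}_i,{\textsf A}_j)$ are formally identical to those in the $\widehat{sl_2}$ case. Consequently, the required combinatorial identities reduce to those already established in \cite[Section 2]{BV1}, and the inductive step can be adapted from that reference essentially verbatim, with the factored form of $p^{ADE}_r$ playing here the same role that the analogous two-variable polynomial plays in the $q$-Onsager setting.

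As a complementary numerical check - already alluded to at the end of Section 2.2 - for each of the values of $r\geq 6$ treated by computer, the solution of the Section 2.2 recursions is found to agree with (\ref{coefffin}); this provides independent evidence that the combinatorial identity being invoked indeed holds, reducing the proof to the identification sketched above.
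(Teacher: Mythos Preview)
The paper does \emph{not} prove this statement: it is explicitly left as a conjecture, with the text stating that ``a proof of conjecture 2 for $r$ generic remains to be given.'' What the paper offers is a list of partial verifications: an explicit check for $r\leq 5$, the observation that $c^{[r,0]}_k$ reproduces the $q$-binomials for all $r$, computer verification for a range of $r\geq 6$, a representation-theoretic consistency argument via the eigenvalue structure (\ref{eigenval}) for $\widehat{g}=a_n^{(1)}$, and the degeneration to Lusztig's relations at $\rho_i=0$. Your proposal is therefore more ambitious than what the paper actually claims.

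That said, the inductive argument you sketch has a genuine gap at its decisive step. The reduction you invoke---that ``monomial manipulations involving only a fixed pair of linked generators $({\textsf A}_i,{\textsf A}_j)$ are formally identical to those in the $\widehat{sl_2}$ case''---is not correct. For the $q$-Onsager algebra the off-diagonal Cartan entry is $a_{ij}=-2$, so the defining relation is \emph{cubic} in ${\textsf A}_i$ (the $q$-deformed Dolan--Grady relation), whereas for simply-laced $\widehat{g}$ one has $a_{ij}=-1$ and the relation (\ref{rel1g}) is \emph{quadratic} in ${\textsf A}_i$. Consequently the ordering procedures, the auxiliary recursions for $\eta^{(r,p)}_j$ and $M^{(r,p)}_k$ in Appendix~A, and the factored generating polynomials (\ref{polyprod1})--(\ref{polyprod2}) are all structurally different from their counterparts in \cite{BV1}; the combinatorial identity you need is \emph{not} the one treated there. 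Furthermore, even in \cite{BV1} the analogous statement is not established as an identity in the abstract algebra: Section~2 of that reference argues via finite-dimensional irreducible representations (tridiagonal pairs), which is precisely the kind of representation-theoretic evidence the present paper reproduces in its final bullet point rather than a proof one can import verbatim. So there is no ``already established'' identity to appeal to, and your inductive step does not close.
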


Supporting evidences for conjecture 2 are now presented.

\begin{itemize}
\item For $r\leq 5$, it is an exercise to check that the coefficients $c^{[r,p]}_k$ given by (\ref{coefffin}) coincide exactly with the ones derived in the previous Section (see cases $r=2,3,4,5$). This proves that conjecture 2 holds for $r\leq 5$.

\item  For $r$ generic  it is easy to check that the coefficients $c^{[r,0]}_k$ obtained from (\ref{coefffin}) are the $q-$binomials (\ref{formbinom}).

\item  For $r\geq 6$ and $p\geq 1$, the comparison is more involved. However, for a large number of values $r\geq 6$ and using a computer program we have checked that the coefficients derived using the recursive formulae coincide exactly with the ones given by (\ref{coefffin}). Although a proof of conjecture 2 for $r$ generic remains to be given,  this gives a strong support for conjecture 2.

\item  Let $\{c_i,\overline{c}_i,w_i\}\in {\mathbb C}$. Let ${\widehat{g}}=a_n^{(1)} (n>1), d_n^{(1)}, e_6^{(1)},e_7^{(1)},e_8^{(1)}$. There exists an algebra homomorphism: ${\cal O}_q(\widehat{g})\ \rightarrow \ {\cal U}_q(\widehat{g})$  given by  \cite{BB1}:
\beqa
{\textsf A}_i \longmapsto {\cal A}_i = c_i\,e_iq_i^{\frac{h_i}{2}} +\overline{c}_i\,f_iq_i^{\frac{h_i}{2}} + w_i q_i^{h_i}\ \label{realg}\
\eeqa
iff the parameters $w_i$ are subject to the constraints: 
$w_i\,\Big(w_j^2+\frac{c_j\,\overline{c}_j}{q+q^{-1}-2}\Big)=0 \ ,\ w_j\,\Big(w_i^2+\frac{c_i\,\overline{c}_i}{q+q^{-1}-2}\Big)=0$ where \ $i,j$ \ are simply linked and $\rho_i\rightarrow c_i\,\overline{c}_i $\ . Let $V$ be the so-called evaluation representation  of ${\cal U}_q(\widehat{g})$ on which  ${\cal A}_i$ act (see e.g. \cite[Proposition 1]{Jim2} for $\widehat{g}=a_n^{(1)}$). For generic parameters $c_i,\overline{c}_i,q$, $V$ is irreducible and ${\cal A}_i$ is diagonalizable on $V$. Let $\theta_k^{(i)}$, $k=0,1,...$ denote the (possibly degenerate)  corresponding eigenvalues of ${\cal A}_i$. For instance, for the fundamental representation\footnote{For $\widehat{g}=a_n^{(1)},\ n\geq 2$, see e.g. \cite{DM}. For $\widehat{g}=d_n^{(1)}, \ n\geq 4$, see e.g. \cite{DelG}.} of ${\cal U}_q(a_n^{(1)})$, the eigenvalues take the simple form:
 \beqa
\theta_k^{(i)} =  C^{(i)}(vq^k +  v^{-1}q^{-k})\ ,\label{eigenval}
\eeqa
where $v,C^{(i)}$ are scalars depending on $c_i,\overline{c}_i,q$.
Let $E_k^{(i)}$ be the projector on the eigenspace associated with the eigenvalue $\theta_k^{(i)}$. Denote $\Delta_1^{(i)}$ as the l.h.s of
the first equation in (\ref{rel1g}). The relation (\ref{rel1g}) implies that for any integers  $k,l$:
\beqa
E_k^{(i)} \Delta^{(i)}_1 E_l^{(i)} =0 \ \ \Rightarrow \ \  p^{ADE}_1(\theta_k^{(i)},\theta_l^{(i)})  E_k^{(i)} {\cal A}_j E_l^{(i)}  = 0 \quad \mbox{with} \quad \rho\equiv\rho_i \ .\nonumber
\eeqa
For generic parameters $c_i,\overline{c}_i,q$,  one observes that $E_k^{(i)} {\cal A}_j E_l^{(i)}\neq 0$ if $|k-l|\leq 1$ and is vanishing otherwize. It implies $p^{ADE}_1(\theta_k^{(i)},\theta_l^{(i)})=0$  for $l=k\pm 1$ which, according to (\ref{eigenval}), is consistent with the structure (\ref{polyprod1}) for $t=0$. The same observation about the structure of the two-variable polynomial can be generalized as follows. Denote $\Delta_r^{(i)}$ as the l.h.s of (\ref{qDGADE}). If the relation (\ref{qDGADE}) with (\ref{coefffin}) holds, then for any integers $k,l$:
\beqa
E_k^{(i)} \Delta^{(i)}_r E_l^{(i)} =0 \ \ \Rightarrow \ \   p^{ADE}_r(\theta_k^{(i)},\theta_l^{(i)})  E_k^{(i)} {\cal A}^r_j E_l^{(i)}  = 0 \quad \mbox{with} \quad \rho\equiv\rho_i  \ . \nonumber
\eeqa
For generic $c_i,\overline{c}_i,q$, $E_k^{(i)} {\cal A}_j^r E_l^{(i)}\neq 0$  if $|k-l|\leq r$ and is vanishing otherwize. It implies $p^{ADE}_r(\theta_k^{(i)},\theta_l^{(i)})=0$  if $|k-l|\leq r$. According to (\ref{eigenval}), it  leads to  the following constraints on the integers $k,l$:
\beqa
&&k=l\pm 1\ ,\ l\pm 3\ , \ l\pm 5\ ,\cdots \ ,\  l\pm r \qquad \mbox{for} \qquad r \ \ \mbox{odd} \ ,\nonumber\\
&&k=l\ ,\  l\pm 2\ , \ l\pm 4\ ,\cdots \ , \ l\pm r \qquad \quad \ \ \mbox{for} \qquad r \ \  \mbox{even} \ .\nonumber
\eeqa
Again, this is in perfect agreement with the factorized form (\ref{polyprod1}), (\ref{polyprod2}). Thus,  for $\widehat{g}=a_n^{(1)}$ the structure of the two-variable polynomial (\ref{defpolyADE}) is consistent with the spectral properties of ${\cal A}_i$.  Although the representation theory of ${\cal O}_q(\widehat{g})$ remains to be fully developed, for the simplest representations of ${\cal O}_q(a_n^{(1)})$ associated with (\ref{realg}) above analysis shows that conjectures 1 and 2 hold for $r$ generic.

\item Consider the algebra homomorphism (\ref{realg}). For $c_i=0$ or  $\overline{c}_i=0$, one has $\rho_i=0$. In this special case, the higher order relations associated with the coideal subalgebra generated by (\ref{realg}) reduce to the Lusztig's higher order $q-$Serre relations (\ref{hqSerre}) \cite{Luzt}. The coefficients $c^{[r,0]}_k$ are $q-$binomials, in agreement with \cite{Luzt}. 
\end{itemize}
\vspace{3mm}

\noindent{\bf Acknowledgements:}  P.B thanks S. Baseilhac, S. Kolb and P. Terwilliger for stimulating discussions. 

\vspace{9mm}

\centerline{\bf APPENDIX A: Coefficients $ \eta^{(m)}_{k,j}$, $M^{(r,p)}_j$}
\vspace{3mm}
The initial values of $\eta^{(m)}_{k,j}$ are given by:\\
\[\eta^{(2)}_{0,0}=[2]_q, \qquad \eta^{(2)}_{0,1}=-1.\]
The recursion relations for $\eta^{(m)}_{k,j}$  and $M^{(r,p)}_j$ read:\\
\beqa
\eta^{(2n+1)}_{p,0}&=&[2]_q\eta^{(2n)}_{p,0}+\eta^{(2n)}_{p,1}, \qquad p=\overline{0,n-1},\nonumber \\
\eta^{(2n+1)}_{n,0}&=&1,\nonumber\\
\eta^{(2n+1)}_{p,1} &=&-\eta^{(2n)}_{p,0}+\eta^{(2n)}_{p-1,0},\qquad p=\overline{1,n-1},\nonumber \\
\eta^{(2n+1)}_{0,1}&=&-\eta^{(2n)}_{0,0},\nonumber\\
\eta^{(2n+1)}_{n,1}&=&\eta^{(2n)}_{n-1,0},\nonumber \\
\eta^{(2n+2)}_{p,0}&=&[2]_q\eta^{(2n+1)}_{p,0}+\eta^{(2n+1)}_{p,1}, \qquad p=\overline{0,n},\nonumber 
\eeqa
\beqa
\eta^{(2n+2)}_{p,1}&=&\eta^{(2n+1)}_{p-1,0}-\eta^{(2n+1)}_{p,0}, \qquad p=\overline{1,n},\nonumber \\
\eta^{(2n+2)}_{0,1}&=&-\eta^{(2n+1)}_{0,0}.\nonumber
\eeqa
\beqa
M^{(2t,p)}_0&=&c^{[2t,p]}_0,\nonumber \\
M^{(2t,p)}_{2t+2-2p}&=&-c^{[2t,0]}_1c^{[2t,p]}_{2t+1-2p},\nonumber \\
M^{(2t,p)}_k&=&c^{[2t,p]}_k-c^{[2t,0]}_1c^{[2t,p]}_{k-1},\qquad k=\overline{1,2t+1-2p},\nonumber \\
M^{(2t+1,t+1)}_0&=&c^{[2t+1,t+1]}_0,\nonumber\\
M^{(2t+1,t+1)}_1&=&-c^{[2t+1,t+1]}_0c^{[2t+1,0]}_1,\nonumber \\
M^{(2t+1,p)}_k&=&-c^{[2t+1,0]}_1c^{[2t+1,p]}_{k-1}+c^{[2t+1,p]}_k,\qquad k=\overline{1,2t+2-2p},\nonumber \\
M^{(2t+1,p)}_0&=&c^{[2t+1,p]}_0,\nonumber\\
M^{(2t+1,p)}_{2t+3-2p}&=&-c^{[2t+1,0]}_1c^{[2t+1,p]}_{2t+2-2p}.\nonumber
\eeqa

\vspace{0.1cm}

\newpage

\end{document}